\newtheorem{theorem}{Theorem}[section]
\newtheorem{lemma}[theorem]{Lemma}
\newtheorem{corollary}[theorem]{Corollary}
\newtheorem{application}[theorem]{Application}
\theoremstyle{remark}
\newtheorem{remark}[theorem]{Remark}
\newtheorem{example}[theorem]{Example}
\DeclareMathOperator{\CZF}{\mathbf{CZF}}
\DeclareMathOperator{\Fin}{Fin}
\DeclareMathOperator{\Pow}{Pow}
\DeclareMathOperator{\id}{id}
\DeclareMathOperator{\PC}{PC}
\DeclareMathOperator{\EFQ}{EFQ}
\DeclareMathOperator{\RAA}{RAA}
\DeclareMathOperator{\DNS}{DNS}
\title{A General Glivenko--G{\"o}del Theorem for Nuclei}
\author{Giulio Fellin
	\institute{Universit\`a di Verona,\\ Strada le Grazie 15, 37134 Verona, Italy}
	\institute{Universit\`a di Trento, Italy}
	\institute{University of Helsinki, Finland}
	\email{giulio.fellin@univr.it}
	\and
	Peter Schuster
	\institute{Universit\`a di Verona,\\ Strada le Grazie 15, 37134 Verona, Italy}
	\email{peter.schuster@univr.it}
}
\begin{document}
	\maketitle
	
	\begin{abstract}
		Glivenko's theorem says that, in propositional logic, classical provability of a formula entails intuitionistic provability of double negation of that formula. 
		We generalise Glivenko's theorem from double negation to an arbitrary nucleus, from provability in a calculus to an inductively generated 
		abstract consequence relation, and from propositional logic to any set of objects whatsoever. The resulting conservation theorem comes with 
		precise criteria for its validity,  which allow us to instantly include G{\"o}del's counterpart for first-order predicate logic of Glivenko's theorem. 
		The open nucleus gives us a form of the deduction theorem for positive logic, and 
		the closed nucleus prompts a variant of the reduction from intuitionistic to minimal logic going back to Johansson. 
	\end{abstract}
	
	\section{Introduction}
	Double negation over intuitionistic logic is a typical instance of a nucleus \cite{aczel:aspects,hayk:more,joh:sto,negri:cont,rosen,simm:frame,simm:cur,van:kuroda}.
	Glivenko's theorem says that, in propositional logic, classical provability of a formula entails intuitionistic provability of the double negation of that formula \cite{gliv29}.
	This stood right at the beginning of the success story of negative translations, which have been put into the context of nuclei \cite{van:kuroda}
	or monads \cite{esca:peirce}.
	As compared to recent literature on Glivenko's theorem 
	\cite{gue:pos,esp:gli,neg:gli,negri:barr,per:con,lit:neg,ono:gli,ish:emb,fel:jac:proc,10.2307/27588518},\footnote{This list of references is by no means meant exhaustive.}
	the purpose of the present paper is to generalise Glivenko's theorem from double negation to an arbitrary nucleus, 
	from provability in a calculus to an abstract consequence relation, and from propositional logic to any set of objects whatsoever. 
	
	To this end we move to a nucleus $j$ over a Hertz--Tarski consequence relation 
	in the form of a (single-conclusion) entailment relation $\rhd$ à la Scott \cite{sco:com,cedcoq:entaildistr}. Assuming that $\rhd$ 
	is inductively generated by axioms and rules, we propose two natural extensions 
	(Section \ref{induced}): $\rhd_j$ generalises the provability of double negation, 
	and $\rhd^j$ is inductively defined by adding the generalisation of double negation elimination to the inductive definition of $\rhd$.
	By their very definitions, $\rhd^j$ satisfies all axioms and rules of $\rhd$, and $\rhd_j$ satisfies all axioms of $\rhd$.
	But when does $\rhd_j$ also satisfy all rules of $\rhd$? Our main result, Theorem \ref{mcextendrules}, 
	says that $\rhd^j$ extends $\rhd_j$, and that the two relations coincide precisely when $\rhd_j$ 
	is closed under the non-axiom rules that are used to inductively generate $\rhd$, 
	which of course is the case whenever there are no such non-axiom rules (Corollary \ref{mcextend}). 
	
	In logic this gives us a multi-purpose conservation criterion (Theorem \ref{mainlogic}),  
	by which propositional and predicate logic can be handled in parallel. 
	The prime instance of course is Glivenko's theorem (Application \ref{Glivenko}(i)) as a syntactical conservation theorem (see also \cite{fel:jac:proc,fel:jac}): 
	\[\Gamma\vdash_c\varphi\iff\Gamma\vdash_i\neg\neg\varphi\]
	where ${\vdash_c}$ and ${\vdash_i}$ denote classical and intuitionistic propositional logic. 
	Simultaneously we re-obtain G\"odel's theorem (Application \ref{Glivenko}(ii)) which states that 
	\[\Gamma\vdash_c^Q\varphi\iff\Gamma\rhd_*^Q\neg\neg\varphi\]
	where ${\vdash_c^Q}$ denotes classical predicate logic, and $\rhd_*^Q$ is any extension (by additional axioms) of intuitionistic predicate logic that satisfies 
	the {double negation shift}: 
	\[\forall x\neg\neg\varphi\rhd\neg\neg\forall x\varphi\]
	While the double negation nucleus $j\varphi\equiv \lnot\lnot\varphi$ is an instance of the continuation monad,  
	it is tantamount to the same case $j\varphi\equiv \neg\varphi\to\varphi$ of the Peirce monad \cite{esca:peirce}. 
	What does our main result mean for other nuclei in logic? The Dragalin--Friedman nucleus  
	$j\varphi\equiv \varphi\vee \bot$, a case of the closed nucleus, yields a variant of the reduction 
	from intuitionistic to minimal logic going back 
	to Johansson (Application \ref{DF}). 
	Last but not least, the open nucleus $j\varphi\equiv A\to \varphi$ prompts a form of the deduction theorem 
	for positive logic (Application \ref{open}).

	\subsubsection*{Preliminaries}
	We intend to proceed in a constructive and predicative way, keeping the concepts elementary and the proofs direct. 
	If a formal system is desired, our work can be placed in a suitable fragment of Aczel's	
	\emph{Constructive Zermelo--Fraenkel Set Theory} ($\CZF$) \cite{a78,a82,a86,aczel:notes,aczel:cstdraft} 
	based on intuitionistic first-order predicate logic. 
	
	{By a \emph{finite set} we understand a set that can be written as $\{a_1, \dots, a_n\}$ for some $n \geq 0$. 
		Given any set $S$, let $\Pow(S)$ (respectively, $\Fin(S)$) consist of the (finite) subsets of $S$.
		We refer to \cite{rin:edde:full} for further provisos to carry over to the present note.}\footnote{
		For example, we deviate from the terminology prevalent in constructive mathematics and set theory 
		\cite{aczel:notes,aczel:cstdraft,Bishop67,Bishop-Bridges85,lombardiquitte:constructive,min:bib}: 
		to reserve the term `finite' to sets which are in \emph{bijection} with $\{1,\ldots,n\}$
		for a necessarily unique $n\ge 0$. Those exactly are the sets which are finite in our sense and are \emph{discrete} too, i.e.~have 
		decidable equality \cite{min:bib}. 
	} 
	
	\section{Entailment relations}
	
	Entailment relations 
	are at the heart of this note.
	We briefly recall the basic notions,
	closely following \cite{rin:edde,rin:edde:full}.
	
	Let $S$ be a set and $\rhd\subseteq\Pow(S)\times S$.
	Once abstracted from the context of logical formulae, all but one of Tarski's axioms of \emph{consequence} \cite{tarski:fundamental}
	\footnote{Tarski has further required that $S$ be countable.} 
	can be put as
	\begin{align*}
		\prftree
		{U\ni a}{U\rhd a} &&
		\prftree
		{\forall b\in U(V\rhd b)}{}{U\rhd a}{V\rhd a}  &&
		\prftree
		{U\rhd a}{\exists U_0\in\Fin(U)(U_0\rhd a)}
	\end{align*}
	where $U,V\subseteq S$ and $a\in S$. These axioms also characterise a {finitary covering} or {Stone covering}
	in formal topology \cite{sam:ifs};\footnote{This is from where we have taken the symbol $\rhd$,
		used also \cite{wang:lwd,cintula:pbc} 
		to denote a `{consecution}' \cite{restall:sub}.}
	see further \cite{Ciraulo:finitary,cir:conv,neg:sto,negri:cont,sam:som,sam:bas}. 
	The notion of consequence has presumably been described first by Hertz \cite{hertz:1,hertz:2,hertz:3}; see also \cite{bez:ax,leg:her}.
	
	Tarski has rather characterised the set of consequences of a set of propositions, which corresponds to
	the \emph{algebraic closure operator} $U\mapsto U^{\rhd}$ on $\Pow(S)$ of a relation $\rhd$ as above where 
	\[
	U^{\rhd}\equiv\{a\in S:U\rhd a\}\,.
	\]
	Rather than with Tarski's notion, we henceforth work with its (tantamount) restriction to finite subsets, i.e.~a 
	\emph{(single-conclusion) entailment relation}.
	\footnote{In the present paper there is no need for abstract \emph{multi-conclusion} consequence or entailment \`a la Scott \cite{sco:eng,sco:com,sco:bac}, 
		Lorenzen's contributions to which are currently under scrutiny \cite{cln:groups,coq:lorenzen}. 
		The relevance of multi-conclusion entailment to constructive algebra, point-free topology, etc.~has been pointed out in \cite{cedcoq:entaildistr}, 
		and has widely been used, e.g.~in \cite{coquand:dlhb,coq:stone,coq:valspace,coquand:lc,coq:hidden-krull,coq:prague,coq:seq,
			neg:ord,rin:phd,wes:ogc,rinwes:sik,sch:hah,LICS2020,wes:path,rin:edde,rin:edde:full,neg:ord,lombardiquitte:constructive}. }  
	This is a relation $\rhd\subseteq\Fin(S)\times S$
	such that 
	\begin{align*}
		\prftree[r]{(R)}{U\ni a}{U\rhd a} &&
		\prftree[r]{(T)}{V\rhd b}{}{V',b\rhd a}{V,V'\rhd a}  &&
		\prftree[r]{(M)}{U\rhd a}{U,U'\rhd a}
	\end{align*}
	for all finite $U,U',V,V'\subseteq S$ and $a,b\in S$, where as usual $U,V\equiv U\cup V$ {and} $V,b\equiv V\cup\{b\}$. 
	Our focus thus is on \emph{finite} subsets of $S$, for which we reserve the letters $U, V, W, \ldots$;
	we sometimes write $a_1,\ldots,a_n$ in place of $\{a_1,\ldots,a_n\}$ even if $n=0$. 
	
	\begin{remark}\label{rmkD}
		The rule (R) is equivalent, by (M), to the axiom $a\rhd a$.
	\end{remark}
	
	Redefining
	\begin{align*}
		T^{\rhd}\equiv\{a\in S:\exists U\in\Fin(T)(U\rhd a)\}
	\end{align*}
	for \emph{arbitrary} subsets $T$ of $S$ gives back an algebraic closure operator on $\Pow(S)$. 
	By writing $T \rhd a$ in place of $a \in T^{\rhd}$, 
	the entailment relations thus correspond exactly to the relations satisfying Tarski's axioms above.
	
	Given an entailment relation $\rhd$, by setting $a \leq b \equiv a \rhd b$ we get a preorder on $S$; 
	whence the conjunction $a \approx b$ of $a \leq b$ and $b \leq a$ is an equivalence relation. 
	
	Quite often an entailment relation is inductively generated from axioms by closing up with respect to 
	the three rules above \cite{rin:cuts}. 
	Some leeway is required in the present paper by allowing for generating rules other than (R), (M), and (T). 
	If, however, these three rules are the only rules employed for inductively generating an entailment relation, 
	we stress this by saying that this is \emph{generated only by axioms}. Given an inductively generated entailment relation 
	$\rhd$ and a set of axioms and rules $P$, then we call 
	\emph{$\rhd$ plus $P$} the entailment relation inductively generated by all axioms and 
	rules that either are used for generating $\rhd$ or belong to $P$. 
	
	A main feature of inductive generation is that 
	if $\rhd$ is an entailment relation generated inductively by certain axioms and rules, 
	then ${\rhd}\subseteq{\rhd'}$ for every entailment relation ${\rhd'}$ satisfying those axioms and rules. 
	By an \emph{extension} ${\rhd'}$ of an entailment relation ${\rhd}$ we mean in general 
	an entailment relation ${\rhd'}$ such that 
	${\rhd}\subseteq{\rhd'}$. We say that an extension ${\rhd'}$ of ${\rhd}$ is 
	\emph{conservative} if 
	also ${\rhd}\supseteq{\rhd'}$ and thus ${\rhd}={\rhd'}$ altogether \cite{rin:edde,rin:edde:full,fel:jac:proc,fel:jac}.
	
	\section{Nuclei over entailment relations}
	Throughout this section, fix a set $S$ endowed with an entailment relation $\rhd$.
	We say that a function $j\colon S\to S$ is a \emph{nucleus (over $\rhd$)} if for all $a,b\in S$ and $U\in\Fin(S)$ the following hold:
	\begin{align*}
		\prftree[r]{L$j$}{U,a\rhd jb}{U,ja\rhd jb}
		&&
		\prftree[r]{R$j$}{U\rhd b}{U\rhd jb}
	\end{align*}
	Unlike L$j$, by (R) and (T) the rule R$j$ can be expressed by an axiom, viz.
	\begin{align}
		b\rhd jb\,\label{nucleus1}
	\end{align}
	
	\begin{remark}
		The above notion of a nucleus includes as a special case the notion of a nucleus on a locale 
		\cite{aczel:aspects,joh:sto,negri:cont,rosen,simm:frame,simm:cur}, which is well-known as a point-free way to put subspaces. In fact, 
		if $S$ is a locale with partial order $\le$, then 
		\[U\rhd a\iff \bigwedge U\le a \]
		defines an entailment relation \cite{coq:seq} such that any given map $j:S\to S$ is a nucleus on $\rhd$ 
		precisely when $j$ is a nucleus on the locale $S$. The latter means 
		that $j$ satisfies 
		\begin{align}\label{meet}
			ja\wedge jb\le j(a\wedge b)
		\end{align}
		on top of the conditions for $j$ being a closure operator on $S$, which can be put as $a\le ja$ and 
		\begin{align}\label{clos}
			a\le jb\implies ja\le jb\,.
		\end{align}
		In the presence of $a\le ja$, which is nothing but (\ref{nucleus1}), the conjunction of (\ref{meet}) and (\ref{clos}) is equivalent to 
		\begin{align*}
			c\wedge a\le jb\implies c\wedge ja\le jb\,,
		\end{align*}
		which in turn subsumes L$j$. So the two notions of a nucleus coincide. 
	\end{remark}

	\begin{example}\label{exnuclei}\
		\begin{enumerate}
			\item Every  entailment relation $\rhd$ has the trivial nucleus $j\equiv\id$.
			\item Consider an algebraic structure $\mathbf{S}$ with a unary self-inverse function $j$
			(e.g. take a group as $\mathbf{S}$ and the inverse as $j$).
			The entailment relation $\rhd$ of $\mathbf{S}$-substructures is inductively defined by
			\begin{align}\label{substructure}
				a_1,...,a_n\rhd f(a_1,...,a_n)
			\end{align}
			for every $n$-ary function $f$ in the language of $\mathbf{S}$, including $j$.
			We want to show that $j$ is a nucleus on $\rhd$.
			Axiom (\ref{nucleus1}) is just (\ref{substructure}) for $f\equiv j$, therefore rule R$j$ holds.
			In particular, $j^2=\id$ implies $j(a)\rhd a$, which, together with (T), gives rule L$j$.
			In conclusion, $j$ is a nucleus on $\rhd$.
			\item 
			Double negation $\neg\neg$ is a nucleus over intuitionistic logic $\vdash_i$ as an entailment relation
			(see Subsection \ref{SSGlivenko} for further details and Subsections \ref{DragalinFriedman}--\ref{deduction} for more nuclei in logic).
		\end{enumerate}
	\end{example}
	
	\subsection*{Entailment relations induced by a nucleus, and conservation}\label{induced}
	Consider a nucleus $j$ over an entailment relation $\rhd$. We define
	\begin{itemize}
		\item[---] the \emph{weak $j$-extension} (or \emph{Kleisli extension}) of $\rhd$ as the relation ${\rhd_j}\subseteq\Fin(S)\times S$ defined by
		\begin{align*}
			U\rhd_ja\iff U\rhd ja
		\end{align*}
		\item[---] the \emph{strong $j$-extension} as the entailment relation ${\rhd^j}\subseteq\Fin(S)\times S$ inductively generated 
		by the axioms and rules of $\rhd$ plus the \emph{stability axiom} for $j$:
		\begin{align}
			ja\rhd^ja\label{extaxiom}
		\end{align}
		In the terminology coined before, ${\rhd^j}$ is nothing but $\rhd$ plus the stability axiom for $j$.  
	\end{itemize}
	\begin{remark}\label{rmkStab}
		By (R) in the form of $a\rhd a$ (Remark \ref{rmkD}), stability holds for $\rhd_j$ too, that is, $ja\rhd_ja$. 
	\end{remark}
	Under appropriate circumstances Remark \ref{rmkStab} will help to obtain ${\rhd^j}\subseteq{\rhd_j}$; see Theorem \ref{main} and Corollary \ref{mcextend}. 
	\begin{lemma}\label{rhdj}
		Let $S$ be a set with an entailment relation $\rhd$ and let $j$ be a nucleus on $\rhd$. 
		\begin{enumerate}
			\item 	$\rhd^j$ is an entailment relation that extends $\rhd$.
			\item	$\rhd_j$ is an entailment relation that extends $\rhd$.
		\end{enumerate}
	\end{lemma}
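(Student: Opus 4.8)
The plan is to handle the two parts by quite different means. For (1) almost everything is already built into the definition, whereas for (2) I must check the three entailment axioms by hand, and there the rule L$j$ will carry the load.

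\emph{Part (1).} By definition $\rhd^j$ is $\rhd$ plus the stability axiom \eqref{extaxiom}, that is, the relation inductively generated by all the axioms and rules generating $\rhd$ together with $ja\rhd a$. Since the inductive-generation procedure closes up under (R), (M) and (T), the relation $\rhd^j$ is an entailment relation by construction, with nothing further to verify. That it extends $\rhd$ is then an instance of the main feature of inductive generation recalled earlier: as $\rhd^j$ satisfies every axiom and rule used to generate $\rhd$ (these forming a subset of the axioms and rules generating $\rhd^j$), we obtain ${\rhd}\subseteq{\rhd^j}$.

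\emph{Part (2).} Here I would unfold $U\rhd_j a\iff U\rhd ja$ throughout and verify (R), (M) and (T) in turn. For (R): if $a\in U$ then $U\rhd a$ by (R) for $\rhd$, whence $U\rhd ja$ by R$j$, that is $U\rhd_j a$. Running the same two steps from an arbitrary premise $U\rhd a$ shows simultaneously that $\rhd_j$ extends $\rhd$. For (M): from $U\rhd_j a$, i.e.\ $U\rhd ja$, rule (M) for $\rhd$ gives $U,U'\rhd ja$, i.e.\ $U,U'\rhd_j a$. Both of these are routine, using only the nucleus axiom \eqref{nucleus1} (equivalently R$j$) and the corresponding property of $\rhd$.

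The only real work, and the sole place where the nucleus is genuinely needed, is transitivity. Assume $V\rhd_j b$ and $V',b\rhd_j a$, that is, $V\rhd jb$ and $V',b\rhd ja$. The idea is to promote the cut formula from $b$ to $jb$ so that the first premise becomes usable. Applying L$j$ to $V',b\rhd ja$ — reading L$j$ with $V'$ for its $U$, with $b$ for its $a$, and with $a$ for its $b$ — yields $V',jb\rhd ja$. Now $V\rhd jb$ and $V',jb\rhd ja$ can be combined by (T) for $\rhd$, cutting on $jb$, to give $V,V'\rhd ja$, i.e.\ $V,V'\rhd_j a$. I expect this cut-on-$jb$ step, enabled precisely by L$j$, to be the crux of the argument; the remaining clauses follow directly from the matching properties of $\rhd$ together with R$j$.
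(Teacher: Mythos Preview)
Your proposal is correct and follows essentially the same approach as the paper: part (i) is dispatched by the definition of $\rhd^j$, and for part (ii) you verify (R) and (M) routinely via R$j$/\eqref{nucleus1} and the corresponding properties of $\rhd$, with the only substantive step being (T), handled exactly as in the paper by using L$j$ to upgrade the cut formula from $b$ to $jb$ before applying (T) for $\rhd$.
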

	\begin{proof}
		(i) holds by the very definition of $\rhd^j$.
		As for (ii): By (\ref{nucleus1}) and Remark \ref{rmkD}, rule (R) is bestowed from $\rhd$ to $\rhd_j$.
		Rule (M) is inherited from $\rhd$, and so is rule (T) in view of L$j$:
		\[
		\prftree[r]{(T)}{U\rhd ja}{
			\prftree[r]{L$j$}{V,a\rhd jb}{V,ja\rhd jb}
		}{U,V\rhd jb}
		\]
		Finally, also ${\rhd}\subseteq{\rhd_j}$ is a consequence of (\ref{nucleus1}). 
	\end{proof}
	
	\begin{remark}
		The nucleus $j$ on $\rhd$ is a nucleus also on $\rhd_j$ and $\rhd^j$. In fact, by Lemma \ref{rhdj} both extensions inherit axiom (\ref{nucleus1}) 
		from $\rhd$, and actually satisfy the following strengthening of L$j$: 
		\begin{align*}
			\prftree[r]{L$j^+$}{U,a\rhd b}{U,ja\rhd b}\,.
		\end{align*}
		While L$j^+$ for $\rhd_j$ is just $Lj$ for $\rhd$, stability $ja\rhd a$ is tantamount to L$j^+$ for any entailment relation $\rhd$ whatsoever. 
	\end{remark}
	
	To understand better whether and when $\rhd_j$ coincides with $\rhd^j$, we first consider a concrete example.
	\begin{example}\label{example}
		Consider deduction in minimal logic $\vdash_m$ with the nucleus $j\varphi\equiv\varphi\vee\bot$ (see Subsection \ref{DragalinFriedman} below for details). 
		Propositional minimal logic $\vdash_m$ is inductively generated by certain axioms plus the rule
		\[ \prftree[r]{R$\to$}{\Gamma,\varphi\vdash_m\psi}{\Gamma\vdash_m\varphi\to\psi} \]
		which cannot be expressed as an axiom. By its very definition, $\vdash_m^j$ too satisfies R$\to$. Does also $\vdash_{mj}$ satisfy this rule? If this were the case, 
		then by definition of $\vdash_{mj}$ we would have 
		\[ \prftree{\Gamma,\varphi\vdash_m\psi\vee\bot}{\Gamma\vdash_m(\varphi\to\psi)\vee\bot} \]
		As $\bot\vdash_m\psi\vee\bot$, we would obtain $\vdash_m(\bot\to\psi)\vee\bot$. However, since minimal logic has the disjunction property and neither disjunct is provable in general, this cannot be the case. So $\rhd_j$ does not satisfy rule R$\to$.
	\end{example}
	
	The moral of Example \ref{example} is that $\rhd$ may already have non-axiom rules, such as R$\to$, which carry over to $\rhd^j$ by its very definition, and thus need to hold in $\rhd_j$ too for the former to be conservative over the latter. 
	To deal with this issue, we say that a rule $r$ that holds for $\rhd$ is \emph{compatible} with $j$ if $r$ also holds for $\rhd_j$.
	
	\begin{remark}\label{rmkaxiom}\
		\begin{enumerate}
			\item Rules (R), (M), (T) are compatible with every nucleus $j$, by Lemma \ref{rhdj}.
			\item Every composition $r$ of compatible rules is compatible. 
			In fact, the derivation that gives $r$ in $\rhd$ can be translated smoothly into $\rhd_j$, as all applied rules are compatible.
			
			This is very useful: if we want to check compatibility for all rules of an entailment relation $\rhd$, 
			it suffices to check compatibility for any set of rules that generate $\rhd$. 
			
			\item Every axiom $a_1,...,a_n\rhd b$ can be viewed as a rule with no premiss, and as such is compatible with 
			every nucleus $j$, simply by R$j$.
			Moreover, rules
			\[ \prftree{U,b\rhd c}{U,a_1,...,a_n\rhd c} \quad\quad \prftree{U\rhd a_1}{...}{U\rhd a_n}{U\rhd b} \]
			which are known respectively as \emph{left} and \emph{right rule} 
			\cite{rin:cuts,fel:msc}
			\footnote{A reader familiar with structural proof theory may be reminded of the notion of 
				left and right rules in sequent calculus \cite{spt,neg:pa}. Though they look similar, 
				the two concepts are not to be confused.} 
			are provably equivalent to the axiom $a_1,...,a_n\rhd b$ and therefore are compatible with $j$.
			\item If an entailment relation $\rhd$ is generated only by axioms, then every rule that holds for $\rhd$ is compatible with any nucleus $j$ over $\rhd$. 
		\end{enumerate}
	\end{remark}
	
	\begin{theorem}[Conservation for nuclei]\label{mcextendrules}\label{main}
		Let $S$ be a set with an entailment relation $\rhd$ inductively generated by axioms and rules, and let $j$ be a nucleus on $\rhd$.
		Then ${\rhd^j}$ extends ${\rhd_j}$, that is ${\rhd_j}\subseteq{\rhd^j}$. Moreover, the following are equivalent:
		\begin{enumerate}
			\item[\emph{(a)}] ${\rhd^j}$ is conservative over ${\rhd_j}$, that is, ${\rhd^j}\subseteq{\rhd_j}$;
			\item[\emph{(b)}] All non-axiom rules that generate $\rhd$ are compatible with $j$.
		\end{enumerate}
	\end{theorem}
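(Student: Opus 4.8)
The plan is to prove the inclusion $\rhd_j\subseteq\rhd^j$ by a single cut against the stability axiom, and then to establish the equivalence by playing off the two faces of inductive generation: the \emph{minimality} of $\rhd^j$ among all entailment relations closed under its generating data in the direction (b)$\Rightarrow$(a), and the fact that $\rhd^j$ is \emph{by construction} closed under every rule of $\rhd$ in the direction (a)$\Rightarrow$(b).

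For the inclusion $\rhd_j\subseteq\rhd^j$, suppose $U\rhd_ja$, i.e. $U\rhd ja$ by definition of the weak extension. Since $\rhd^j$ extends $\rhd$ by Lemma \ref{rhdj}(i), we have $U\rhd^j ja$, and the stability axiom (\ref{extaxiom}) gives $ja\rhd^j a$. A single application of (T) then yields the claim:
\[
\prftree[r]{(T)}{U\rhd^j ja}{ja\rhd^j a}{U\rhd^j a}\,.
\]

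For (a)$\Rightarrow$(b), I would combine the hypothesis $\rhd^j\subseteq\rhd_j$ with the inclusion just proved to get $\rhd_j=\rhd^j$. Now $\rhd^j$ is inductively generated by exactly the axioms and rules of $\rhd$ together with the stability axiom, so it is closed under every rule used to generate $\rhd$; in particular every non-axiom generating rule holds for $\rhd^j=\rhd_j$, which is precisely the statement that these rules are compatible with $j$.

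For (b)$\Rightarrow$(a), the idea is to verify that $\rhd_j$ itself satisfies all the data that inductively generate $\rhd^j$, so that the minimality of $\rhd^j$ forces $\rhd^j\subseteq\rhd_j$. There are three things to check. First, $\rhd_j$ is an entailment relation extending $\rhd$ by Lemma \ref{rhdj}(ii), whence every axiom $U\rhd a$ of $\rhd$ also gives $U\rhd_j a$. Second, every rule generating $\rhd$ holds for $\rhd_j$: the structural rules (R), (M), (T) do so because $\rhd_j$ is an entailment relation; each axiom-as-rule together with the associated left and right rules is compatible by Remark \ref{rmkaxiom}(i),(iii); and the remaining non-axiom rules are compatible by hypothesis (b). Third, the stability axiom $ja\rhd_j a$ holds by Remark \ref{rmkStab}. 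I expect the only genuinely delicate point to be this bookkeeping, namely arguing cleanly that the family of rules generating $\rhd$ splits into structural rules, axiom-rules, and genuinely non-axiom rules, so that Remark \ref{rmkaxiom} together with (b) exhausts every case; once that case analysis is secured, both implications follow at once from inductive generation.
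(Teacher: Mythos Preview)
Your proof is correct and follows essentially the same approach as the paper: the inclusion $\rhd_j\subseteq\rhd^j$ is obtained by a single cut against stability, (a)$\Rightarrow$(b) uses that $\rhd^j$ is by construction closed under the generating rules of $\rhd$ together with the equality $\rhd_j=\rhd^j$, and (b)$\Rightarrow$(a) checks that $\rhd_j$ satisfies all the generating data of $\rhd^j$ and invokes minimality. The only difference is cosmetic: the paper's case split for the rules of $\rhd$ is simply ``axiom (hence compatible by Remark~\ref{rmkaxiom}) or non-axiom (hence compatible by hypothesis)'', so your worry about a delicate three-way bookkeeping is unnecessary.
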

	\begin{proof}
		First recall that, by its very definition, $\rhd^j$ is inductively generated by rules (R), (M), (T), stability (\ref{extaxiom}), and all rules that generate $\rhd$. In particular, ${\rhd}\subseteq{\rhd^j}$.
		
		Now suppose that $U\rhd_jb$, i.e.~$U\rhd jb$. Since ${\rhd}\subseteq{\rhd^j}$, also $U\rhd^jjb$.
		Then apply
		\[ \prftree[r]{(T)}{U\rhd^jjb}{}{jb\rhd^jb}{U\rhd^jb} \]
		to show ${\rhd_j}\subseteq{\rhd^j}$.
		
		\underline{(a)$\Rightarrow$(b)} (b) follows directly from (a) and the fact that $\rhd^j$ satisfies all rules that generate $\rhd$.
		
		\underline{(b)$\Rightarrow$(a)}
		Let us consider one by one the axioms and rules that generate $\rhd^j$:
		\begin{itemize}
			\item[---] $\rhd_j$ satisfies (R), (M), (T), since $\rhd_j$ is an entailment relation 
			by Lemma \ref{rhdj}.
			\item[---] $\rhd_j$ satisfies stability (\ref{extaxiom}) by Remark \ref{rmkStab}.
			\item[---] $\rhd_j$ satisfies all rules that generate $\rhd$ since they are either compatible with $j$ by hypothesis or axioms and thus compatible with $j$ by Remark \ref{rmkaxiom}.
		\end{itemize}
		As $\rhd^j$ is the smallest extension of $\rhd$ satisfying these axioms and rules, we get ${\rhd^j}\subseteq{\rhd_j}$.
	\end{proof}
	
	\begin{corollary}\label{mcextend}
		Let $S$ be a set with an entailment relation $\rhd$ inductively generated only by axioms, and let $j$ be a nucleus on $\rhd$.
		Then ${\rhd^j}={\rhd_j}$, that is, $\rhd^j$ is a conservative extension of $\rhd_j$.
	\end{corollary}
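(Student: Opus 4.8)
The plan is to read off the Corollary as the special case of Theorem~\ref{mcextendrules} in which hypothesis (b) holds for trivial reasons. First I would invoke the first assertion of Theorem~\ref{mcextendrules}, which holds for any inductively generated entailment relation together with any nucleus, so as to secure the inclusion ${\rhd_j}\subseteq{\rhd^j}$ at no extra cost. It then remains only to establish the reverse inclusion ${\rhd^j}\subseteq{\rhd_j}$, which is precisely condition (a) of the theorem.

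For this I would verify condition (b)---that every non-axiom rule generating $\rhd$ is compatible with $j$---and then appeal to the equivalence (a)$\Leftrightarrow$(b). The point is that, by hypothesis, $\rhd$ is generated \emph{only by axioms}, so the only rules used in its inductive generation are the structural rules (R), (M), (T) together with the generating axioms. The axioms do not count as non-axiom rules, and the three structural rules are compatible with every nucleus by Remark~\ref{rmkaxiom}(i). Hence there are no non-axiom rules left to check and (b) is satisfied vacuously; alternatively one may cite Remark~\ref{rmkaxiom}(iv) directly, which already records that every rule holding for such a $\rhd$ is compatible with any nucleus over it.

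Applying the equivalence then yields (a), that is ${\rhd^j}\subseteq{\rhd_j}$, and combining this with the inclusion from the first step gives ${\rhd^j}={\rhd_j}$, as claimed. I do not anticipate any genuine obstacle: the entire content has been packaged into Theorem~\ref{mcextendrules}, and the argument reduces to observing that the compatibility hypothesis (b) becomes empty once no non-axiom generating rules are present. The only point demanding a moment's care is the bookkeeping convention that the structural rules (R), (M), (T) are always among the generating rules yet are always compatible, so that the phrase ``generated only by axioms'' genuinely leaves nothing further to verify.
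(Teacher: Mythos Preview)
Your proposal is correct and matches the paper's own (implicit) argument: the corollary is stated without proof precisely because it is the special case of Theorem~\ref{mcextendrules} where condition~(b) holds vacuously, as you explain. Your invocation of Remark~\ref{rmkaxiom}(i) or~(iv) to handle the structural rules is exactly the intended justification.
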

	
	Let $j$ be a nucleus over an entailment relation $\rhd$ inductively generated by axioms and rules, 
	and let $\rhd_*$ be an {extension} of $\rhd$. 
	We say that $\rhd_*$ is an \emph{intermediate $j$-extension} of $\rhd$ if $\rhd_*$ is $\rhd$ plus $*$ 
	where $*$ is a collection of axioms that are valid in $\rhd^j$. In particular, ${\rhd}\subseteq{\rhd_*}\subseteq{\rhd^j}$.
	\begin{remark}\label{rmkintermediate}
		Since $\rhd\subseteq\rhd_*$, we have ${\rhd^j}\subseteq{\rhd_*^j}$.
		On the other hand, as all axioms in $*$ already hold for $\rhd^j$, we also have ${\rhd_*^j}\subseteq{\rhd^j}$.
		Therefore ${\rhd_*^j}={\rhd^j}$.
	\end{remark}
	\begin{corollary}[Conservation for intermediate $j$-extensions]\label{intermediate}
		Let $S$ be a set with an entailment relation $\rhd$ inductively generated by axioms and rules, let $j$ be a nucleus on $\rhd$,
		and let $\rhd_*$ be an intermediate $j$-extension of $\rhd$.
		Then ${\rhd^j}$ extends ${\rhd_{*j}}$, that is ${\rhd_{*j}}\subseteq{\rhd^j}$. Moreover, the following are equivalent:
		\begin{enumerate}
			\item[\emph{(a)}] ${\rhd^j}$ is conservative over ${\rhd_{*j}}$, that is, ${\rhd^j}\subseteq{\rhd_{*j}}$;
			\item[\emph{(b)}] All non-axiom rules that generate $\rhd$ hold for $\rhd_{*j}$.
		\end{enumerate}
	\end{corollary}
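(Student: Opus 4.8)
The plan is to read the statement off Theorem~\ref{main}, applied not to $\rhd$ but to the intermediate extension $\rhd_*$ itself, and then to translate the conclusion back through Remark~\ref{rmkintermediate}. Since $\rhd_*$ is $\rhd$ plus the collection $*$ of \emph{axioms}, it is again inductively generated by axioms and rules, and adding axioms contributes no new non-axiom rules; hence the non-axiom rules generating $\rhd_*$ are exactly those generating $\rhd$. Granting for the moment that $j$ is still a nucleus on $\rhd_*$, Theorem~\ref{main} for the pair $(\rhd_*,j)$ yields ${\rhd_{*j}}\subseteq{\rhd_*^j}$ together with the equivalence between ``${\rhd_*^j}$ is conservative over ${\rhd_{*j}}$'' and ``all non-axiom rules generating $\rhd_*$ are compatible with $j$''. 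By Remark~\ref{rmkintermediate} one has ${\rhd_*^j}={\rhd^j}$, and since the non-axiom rules of $\rhd_*$ are those of $\rhd$, ``compatible with $j$'' here reads ``holds for $\rhd_{*j}$''; this turns the two clauses into exactly (a) and (b).

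I would also spell out the two easy halves directly, mirroring the proof of Theorem~\ref{main}. For ${\rhd_{*j}}\subseteq{\rhd^j}$: if $U\rhd_{*j}b$, i.e.\ $U\rhd_* jb$, then $U\rhd^jjb$ because ${\rhd_*}\subseteq{\rhd^j}$ by intermediacy, and a single cut with the stability axiom $jb\rhd^jb$ gives $U\rhd^jb$. For (a)$\Rightarrow$(b): under (a) the previous inclusion sharpens to ${\rhd^j}={\rhd_{*j}}$, and as ${\rhd^j}$ validates by construction every rule generating $\rhd$, so does ${\rhd_{*j}}$. For (b)$\Rightarrow$(a) I would verify, as in Theorem~\ref{main}, that ${\rhd_{*j}}$ satisfies each generator of ${\rhd^j}={\rhd_*^j}$: stability by (R); the axioms of $\rhd$ and those of $*$ because axioms are always compatible (Remark~\ref{rmkaxiom}); and the non-axiom rules of $\rhd$ by hypothesis (b). Minimality of ${\rhd_*^j}$ then forces ${\rhd^j}\subseteq{\rhd_{*j}}$.

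The step I expect to be the real obstacle is the one used tacitly throughout, namely that ${\rhd_{*j}}$ is an entailment relation, equivalently that $j$ is still a nucleus over $\rhd_*$. The axiom R$j$, i.e.\ $b\rhd_* jb$, is inherited from ${\rhd}\subseteq{\rhd_*}$; but L$j$ has to be re-established for $\rhd_*$, and it is precisely what makes rule (T) hold for ${\rhd_{*j}}$, since combining $U\rhd_* ja$ with $V,a\rhd_* jb$ via (T) of $\rhd_*$ requires first replacing $a$ by $ja$ on the left. Here the hypothesis that $*$ consists of axioms valid in $\rhd^j$ is exactly what one must exploit: a bare axiom $C\rhd d$ in $*$ need not satisfy its $j$-shifted form (each $c\in C$ replaced by $jc$, and $d$ by $jd$) inside $\rhd_*$ alone, so certifying L$j$ for $\rhd_*$---and thereby licensing the appeal to Theorem~\ref{main}---is the delicate point on which the whole reduction rests.
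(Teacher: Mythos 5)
Your first two paragraphs are precisely the paper's proof: Corollary \ref{intermediate} is obtained there by applying Theorem \ref{main} to $\rhd_*$, using ${\rhd_*^j}={\rhd^j}$ (Remark \ref{rmkintermediate}) and the fact that the only generators added to those of $\rhd$ are axioms, hence compatible (Remark \ref{rmkaxiom}); your spelled-out verification of the two easy halves is that same argument unfolded. The obstacle you isolate in your last paragraph is also exactly right, and it is a point on which the paper's proof is silent: Theorem \ref{main} may be applied to the pair $(\rhd_*,j)$ only if $j$ is a nucleus on $\rhd_*$, and this is equivalent (since $b\rhd_* jb$ is inherited from $\rhd$) to rule (T) holding for $\rhd_{*j}$, which is needed precisely in the direction (b)$\Rightarrow$(a) and nowhere else.

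Moreover, your worry is not a formality, and the hypothesis that $*$ is valid in $\rhd^j$ cannot by itself discharge it. If (b) is read, as the paper's applications read it (only R$\to$ and R$\forall$ ever get checked), as quantifying over the generating rules other than (R), (M), (T), then the statement is actually false. Take $S=\{p,q,\gamma,jp,jq,j\gamma\}$ with the evident idempotent $j$, and generate $\rhd$ by the axioms $a\rhd ja$ ($a\in S$) and $\emptyset\rhd jp$ together with the one-instance rule $r$: from $\emptyset\rhd p$ infer $\gamma\rhd jq$. Since $\emptyset\rhd p$ is underivable, $r$ never fires, so $U\rhd c$ iff $c\in U\cup j[U]\cup\{jp\}$, and $j$ is a nucleus on $\rhd$. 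In $\rhd^j$ stability gives $\emptyset\rhd^j p$, so $r$ fires: $\gamma\rhd^j jq$, hence also $j\gamma\rhd^j jq$ by stability and (T). Now let $*$ consist of the single axiom $\emptyset\rhd p$, which is valid in $\rhd^j$. One checks that $U\rhd_* c$ iff $c\in U\cup j[U]\cup\{p,jp\}$ or ($c=jq$ and $\gamma\in U$); this relation is closed under all the generators, including $r$. Consequently $r$ holds for $\rhd_{*j}$ (its premise $\emptyset\rhd_* jp$ and its conclusion $\gamma\rhd_* jq$ both hold), so (b) holds; yet $j\gamma\rhd^j jq$ while $j\gamma\not\rhd_* jq$, so (a) fails. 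Equivalently, L$j$ fails for $\rhd_*$: $\gamma\rhd_* jq$ but $j\gamma\not\rhd_* jq$. So a correct proof must either read (T) itself as one of the ``non-axiom rules'' in (b) --- then (b) literally contains the missing property, since (T) for $\rhd_{*j}$ together with $a\rhd_* ja$ yields L$j$ for $\rhd_*$, after which the appeal to Theorem \ref{main} is licensed --- or add the hypothesis that $j$ remains a nucleus on $\rhd_*$. In the logical instances (Theorem \ref{mainlogic}) that hypothesis is automatic, which is why the gap is invisible there: L$j$ is equivalent, via R$\to$ and modus ponens, to the axiom schema $\varphi\to j\psi,\,j\varphi\rhd j\psi$, and axiom schemata persist from $\vdash$ to every axiomatic extension $\vdash_*$. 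That persistence argument is exactly the verification your final paragraph calls for; in the abstract setting it is where the real work lies, and neither your proposal nor the paper carries it out.
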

	\begin{proof}
		Follows from Theorem \ref{main} for $\rhd_*$ by noticing that ${\rhd_*^j}={\rhd^j}$ (Remark \ref{rmkintermediate}) and 
		that all additional rules of $\rhd_*$ are axioms and thus already compatible with $j$ (Remark \ref{rmkaxiom}).
	\end{proof}
	
	The following characterisation will prove useful in several applications: 
	\begin{lemma}\label{proposition}
		Let $S$ be a set with an entailment relation $\rhd$, and let $j$ be a nucleus on $\rhd$.
		Let $r$ be a rule holding for $\rhd$. The following are equivalent:
		\begin{enumerate}
			\item[\emph{(a)}] Rule $r$ is compatible with $j$.
			\item[\emph{(b)}]
			For every instance
			\[
			\raisebox{-9pt}{$\prftree
				{U_1\rhd b_1}{...}{U_n\rhd b_n}{U\rhd b}$}
			\]
			of rule $r$, there is $\beta\in S$ such that $\beta\rhd jb$ and
			\begin{align}\label{beta}
				\raisebox{-9pt}{$\prftree
					{U_1\rhd jb_1}{...}{U_n\rhd jb_n}{U\rhd\beta}$}
			\end{align}
		\end{enumerate}
	\end{lemma}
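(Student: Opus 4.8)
The plan is to unwind the definition of compatibility through the Kleisli equivalence $U\rhd_j a\iff U\rhd ja$ and then exhibit, in each direction, the witness $\beta$ explicitly. Throughout I would fix an arbitrary instance of $r$ with premisses $U_1\rhd b_1,\dots,U_n\rhd b_n$ and conclusion $U\rhd b$; the two directions differ only in what is assumed about this instance.

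For \emph{(a)$\Rightarrow$(b)}: By definition, $r$ being compatible with $j$ means that $r$ holds for $\rhd_j$, which by the Kleisli equivalence says precisely that from $U_1\rhd jb_1,\dots,U_n\rhd jb_n$ one may infer $U\rhd jb$, on every instance. I would therefore simply take $\beta\equiv jb$. Then the side condition $\beta\rhd jb$ becomes $jb\rhd jb$, which holds by (R) (Remark \ref{rmkD}), and the displayed rule (\ref{beta}) is literally the assertion that $r$ holds for $\rhd_j$ on the instance at hand. So (b) holds with this trivial choice of $\beta$.

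For \emph{(b)$\Rightarrow$(a)}: I would verify directly that $r$ holds for $\rhd_j$. Assume the premisses of the instance hold in $\rhd_j$, i.e.\ $U_i\rhd jb_i$ for every $i$. Applying hypothesis (b) to this instance yields some $\beta\in S$ with $\beta\rhd jb$ together with rule (\ref{beta}); feeding the premisses $U_i\rhd jb_i$ into (\ref{beta}) produces $U\rhd\beta$. A single application of (T),
\[
\prftree[r]{(T)}{U\rhd\beta}{\beta\rhd jb}{U\rhd jb}
\]
then gives $U\rhd jb$, that is $U\rhd_j b$. Hence $r$ holds for $\rhd_j$, i.e.\ is compatible with $j$.

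I expect no serious obstacle here: the whole content lies in the choice of $\beta$, which is forced to be $jb$ in the easy direction, while the converse is essentially one cut. The only point requiring care is the bookkeeping of the Kleisli translation --- making sure that ``premisses/conclusion in $\rhd_j$'' are read as ``$\,\rhd j(\cdot)$'' throughout --- and applying (T) in the correct orientation, with empty context $V'=\emptyset$, so that $U\rhd\beta$ and $\beta\rhd jb$ compose to $U\rhd jb$.
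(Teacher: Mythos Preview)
Your proof is correct and follows essentially the same approach as the paper: in both directions the witness is chosen as $\beta\equiv jb$ for (a)$\Rightarrow$(b), and (b)$\Rightarrow$(a) is a single cut of $U\rhd\beta$ against $\beta\rhd jb$. Your write-up is in fact slightly more explicit than the paper's, which compresses the argument into one line per direction.
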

	\begin{proof}
		\underline{(a)$\Rightarrow$(b)} If we take $\beta\equiv jb$, then (b) immediately follows by reflexivity and compatibility.
		
		\underline{(b)$\Rightarrow$(a)} Recall that $b\rhd jb$, and that from $U\rhd\beta$ and $\beta\rhd jb$ follows $U\rhd jb$ by (T).
	\end{proof}
	
	\section{Logic as entailment}
	
	\label{logicsection}
	Throughout this section, the overall assumption is that $S$ is a set of propositional or (first-order) predicate formulae 
	containing $\top,\bot$, and closed under the connectives $\vee,\wedge,\to,\neg$ for propositional logic 
	and also under the quantifiers $\forall,\exists$ for predicate logic. 
	Following \cite{ras:alg,blkov2019proof}, 
	by \emph{(propositional) positive logic} $\vdash_p$ we mean the positive fragment of propositional intuitionistic logic. 
	More precisely, we define $\vdash_p$ as the least entailment relation $\rhd$ that satisfies the \emph{deduction theorem}
	\begin{align*}
		\prftree[r]{R$\to$}{\Gamma,\varphi\rhd\psi}{\Gamma\rhd\varphi\to\psi}
	\end{align*}
	and the following axioms:
	\begin{align*}
		\varphi,\psi&\rhd\varphi\wedge\psi
		&\varphi\wedge\psi&\rhd\varphi
		&\varphi\wedge\psi&\rhd\psi
		\\
		\varphi&\rhd\varphi\vee\psi
		&\psi&\rhd\varphi\vee\psi
		&\varphi\vee\psi,\varphi\to\delta,\psi\to\delta&\rhd\delta
		\\
		\varphi,\varphi\to\psi&\rhd\psi
		\\
		&\rhd\top
	\end{align*}
	Of course, we understand this as an inductive definition.
	The above system for positive logic \cite{ras:alg} is equivalent to the $\mathbf{G3}$-style calculus in Table \ref{calculus} taken from 
	\cite{blkov2019proof}; they inductively generate the same entailment relation.
	\begin{table}
		\begin{align*}
			\hline\\
			&
			\prftree[r]{L$\wedge$}{\Gamma,\varphi,\psi\rhd\delta}{\Gamma,\varphi\wedge\psi\rhd\delta}
			&&
			\prftree[r]{R$\wedge$}{\Gamma\rhd\varphi}{\Gamma\rhd\psi}{\Gamma\rhd\varphi\wedge\psi}
			\\\vspace{10pt}
			&
			\prftree[r]{L$\vee$}{\Gamma,\varphi\rhd\delta}{}{\Gamma,\psi\rhd\delta}{\Gamma,\varphi\vee\psi\rhd\delta}
			&&
			\prftree[r]{R$\vee_1$}{\Gamma\rhd\varphi}{\Gamma\rhd\varphi\vee\psi}
			&
			\prftree[r]{R$\vee_2$}{\Gamma\rhd\psi}{\Gamma\rhd\varphi\vee\psi}
			\\\vspace{10pt}
			&
			\prftree[r]{L$\to$}{\Gamma\rhd\varphi}{}{\Gamma,\psi\rhd\delta}{\Gamma,\varphi\to\psi\rhd\delta}
			&&
			\prftree[r]{R$\to$}{\Gamma,\varphi\rhd\psi}{\Gamma\rhd\varphi\to\psi}
			\\\vspace{10pt}
			&
			\prftree[r]{R$\top$}{\phantom{x}}{\Gamma\rhd\top}
			\\\\\hline
		\end{align*}
		\caption{Sequent calculus-like rules for positive propositional logic \cite{ras:alg} following \cite{blkov2019proof}.}\label{calculus}
	\end{table}
	
	On top of $\vdash_p$ we consider the following additional axioms:
	\begin{align}
		\varphi\to\bot&\approx\neg\varphi\tag{$\PC$}
		\\
		\bot&\rhd\varphi\tag{$\EFQ$}
		\\
		\neg\neg\varphi&\rhd\varphi\tag{$\RAA$}
	\end{align}
	They are known as \emph{principium contradictionis}, \emph{ex falso quodlibet sequitur} and \emph{reductio ad absurdum}.
	The two directions of $\PC$ can also be expressed via the rules
	\begin{align*}
		\prftree[r]{L$\neg$}{\Gamma\rhd\varphi}{\Gamma,\bot\rhd\psi}{\Gamma,\neg\varphi\rhd\psi}
		&&
		\prftree[r]{R$\neg$}{\Gamma,\varphi\rhd\bot}{\Gamma\rhd\neg\varphi}
	\end{align*}
	In the presence of $\EFQ$, the rule L$\neg$ can be simplified as
	\begin{align*}
		\prftree[r]{L$\neg$}{\Gamma\rhd\varphi}{\Gamma,\neg\varphi\rhd\psi}
	\end{align*}
	Axiom $\EFQ$ is sometimes considered as a rule without premises:
	\[ \prftree[r]{L$\bot$}{\Gamma,\bot\rhd\varphi} \]
	
	We define:
	\begin{itemize}
		\item[---] \emph{minimal logic} $\vdash_m$ as $\vdash_p$ plus $\PC$,
		\item[---] \emph{intuitionistic logic} $\vdash_i$ as $\vdash_m$ plus $\EFQ$,
		\item[---] \emph{classical logic} $\vdash_c$ as $\vdash_i$ plus $\RAA$.
	\end{itemize}
	
	Let $\vdash_*$ be $\vdash_p$ plus additional axioms. In particular, $\vdash_*$ satisfies the deduction theorem R$\to$. 
	The (first-order) predicate version $\vdash_*^Q$ of $\vdash_*$, which we also refer to as $\vdash_*$ \emph{plus quantifiers},  
	is then obtained by adding quantifiers $\forall$ and $\exists$ to the language and the following rules to the inductive definition of $\vdash_*$:
	\begin{align*}
		\prftree[r]{L$\forall$}{\varphi[t/x],\Gamma,\forall x\varphi\rhd\delta}{\Gamma,\forall x\varphi\rhd\delta}
		&&\prftree[r]{R$\forall$}{\Gamma\rhd\varphi[y/x]}{\Gamma\rhd\forall x\varphi}
		\\
		\prftree[r]{L$\exists$}{\Gamma,\varphi[y/x]\rhd\delta}{\Gamma,\exists x\varphi\rhd\delta}
		&&\prftree[r]{R$\exists$}{\Gamma\rhd\varphi[t/x]}{\Gamma\rhd\exists x\varphi}
	\end{align*}
	with the condition that $y$ has to be fresh in L$\exists$ and R$\forall$. Rules L$\forall$ and R$\exists$ can be expressed as axioms: 
	\begin{align*}
		\forall x\varphi&\rhd\varphi[t/x]
		\\\varphi[t/x]&\rhd\exists x\varphi
	\end{align*}
	The definition of a nucleus $j$ given in \cite{van:kuroda} requires $j$ to be compatible with substitution, that is, 
	\[ j(\varphi[t/x])\equiv(j\varphi)[t/x] \]
	We prefer not to have this as a general assumption, but to make explicit whenever we need it.
	
	\section{Conservation for nuclei in logic}
	
	Among the usual logical rules, R$\to$, R$\forall$ and L$\exists$ are the only ones that cannot be expressed as axioms. Rule L$\exists$ is
	compatible with $j$ for every nucleus $j$ as it does not affect the right-hand side of the sequent.
	Therefore, when checking compatibility of rules with $j$, if we do not add other rules that cannot be expressed as axioms, then the only rules we have to check are R$\to$ and R$\forall$.
	\begin{lemma}\label{lemmacompatible}
		Let $\vdash_*$ be $\vdash_p$ plus additional axioms, and let $j$ be a nucleus on ${\vdash_*}$. Consider $\vdash_*$ as $\rhd$.
		\begin{enumerate}
			\item R$\to$ is compatible with $j$ if and only if \[\varphi\to j\psi\vdash_* j(\varphi\to\psi)\]
			\item If $j$ is compatible with substitution, then R$\forall$ is compatible with $j$ if and only if \[\forall xj\varphi\vdash_*^Q j\forall x\varphi\]
		\end{enumerate}
	\end{lemma}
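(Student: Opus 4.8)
The plan is to unfold the definition of compatibility for each of the two rules and translate it, via the defining equivalence $U\vdash_{*j}a\iff U\vdash_* ja$, into a statement purely about $\vdash_*$. In both cases this statement is an implication between a closure-under-a-rule condition and a single entailment, and I will prove the two directions using, on the one hand, the fact that $\vdash_*$ (resp.\ $\vdash_*^Q$) itself satisfies R$\to$ (resp.\ R$\forall$) together with cut (T), and, on the other hand, a suitable instantiation of the context $\Gamma$ together with modus ponens (resp.\ the axiom form of L$\forall$).

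For (i), by definition R$\to$ is compatible with $j$ exactly when R$\to$ holds for $\vdash_{*j}$, i.e.\ when $\Gamma,\varphi\vdash_* j\psi$ implies $\Gamma\vdash_* j(\varphi\to\psi)$ for all $\Gamma,\varphi,\psi$. For the direction from the displayed entailment to compatibility I would assume $\Gamma,\varphi\vdash_* j\psi$, apply R$\to$ for $\vdash_*$ to obtain $\Gamma\vdash_*\varphi\to j\psi$, and then cut against the hypothesis $\varphi\to j\psi\vdash_* j(\varphi\to\psi)$. For the converse I would instantiate the compatibility condition at $\Gamma\equiv\{\varphi\to j\psi\}$: modus ponens gives $\varphi\to j\psi,\varphi\vdash_* j\psi$, whence compatibility yields $\varphi\to j\psi\vdash_* j(\varphi\to\psi)$, as desired.

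For (ii) the structure is the same, with R$\forall$ in place of R$\to$ and the axiom form of L$\forall$, namely $\forall x\chi\vdash_*^Q\chi[t/x]$, in place of modus ponens. Here R$\forall$ is compatible with $j$ precisely when it holds for $\vdash_{*j}^Q$; unfolding and then using substitution compatibility $j(\chi[t/x])\equiv(j\chi)[t/x]$ rewrites this condition as
\[
\Gamma\vdash_*^Q (j\varphi)[y/x] \ \text{ implies } \ \Gamma\vdash_*^Q j\forall x\varphi .
\]
Reading the premise as an instance of $j\varphi$, the direction from the displayed entailment to compatibility uses R$\forall$ for $\vdash_*^Q$ (on the formula $j\varphi$, with $y$ fresh) to get $\Gamma\vdash_*^Q\forall x\,j\varphi$ and then cuts against $\forall x\,j\varphi\vdash_*^Q j\forall x\varphi$; the converse instantiates at $\Gamma\equiv\{\forall x\,j\varphi\}$ and feeds in $\forall x\,j\varphi\vdash_*^Q(j\varphi)[y/x]$ obtained from L$\forall$.

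I expect the main subtlety to lie entirely in (ii), in the bookkeeping around the eigenvariable: substitution compatibility is exactly what makes the two sides of the translation match, so I must flag precisely where it is invoked, and I must choose $y$ fresh for $\Gamma$ and $\varphi$ so that R$\forall$ for $\vdash_*^Q$ (in the forward direction) and the axiom L$\forall$ (in the converse) may legitimately be applied. Part (i) should be entirely routine once the definition of $\vdash_{*j}$ is unfolded; the only point to keep in mind is that R$\to$ is available for $\vdash_*$ because $\vdash_*$ is $\vdash_p$ plus axioms.
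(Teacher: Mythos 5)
Your proof is correct and takes essentially the same route as the paper's: your ``only if'' directions (instantiating compatibility at $\Gamma\equiv\{\varphi\to j\psi\}$ via modus ponens, resp.\ at $\Gamma\equiv\{\forall x\,j\varphi\}$ via the axiom form of L$\forall$) are literally the paper's argument, and your ``if'' directions (apply R$\to$, resp.\ R$\forall$, for $\vdash_*$ and then cut) simply inline the paper's appeal to Lemma \ref{proposition} with $\beta\equiv\varphi\to j\psi$, resp.\ $\beta\equiv\forall x\,j\varphi$. The paper dismisses (ii) as ``analogous'', so your explicit eigenvariable and substitution-compatibility bookkeeping is slightly more careful than the original, but it is not a different method.
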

	\begin{proof}
		We prove (i), the proof of (ii) is analogous. As for ``if'', by Lemma \ref{proposition}, R$\to$ is compatible with $j$ if and only if for 
		every instance
		\[
		\prftree
		{\Gamma,\varphi\vdash_*\psi}{\Gamma\vdash_*\varphi\to\psi}
		\]
		of R$\to$ there is $\beta\in S$ such that $\beta\vdash_* j(\varphi\to\psi)$ and
		\[
		\prftree
		{\Gamma,\varphi\vdash_* j\psi}{\Gamma\vdash_*\beta}
		\]
		By R$\to$, the latter condition is satisfied if we set $\beta\equiv\varphi\to j\psi$, 
		for which the former condition reads as
		\[\varphi\to j\psi\vdash_* j(\varphi\to\psi).\]
		
		As for ``only if'', compatibility directly 
		entails the desired criterion. In fact, as an 
		instance of \emph{modus ponens} we have 
		\[\varphi\to j\psi, \varphi \vdash_* j\psi,\]
		which by the very definition of $\vdash_j$ is nothing but 
		\[\varphi\to j\psi, \varphi \vdash_{*j} \psi.\]
		By compatibility, the deduction theorem carries over from $\vdash_{*}$ to $\vdash_{*j}$. Hence we get 
		\[\varphi\to j\psi \vdash_{*j} \varphi \to \psi,\]
		which again by the definition of $\vdash_j$ yields the desired criterion: 
		\[\varphi\to j\psi \vdash_{*} j(\varphi \to \psi). \]
	\end{proof}
	
	This gives us the following version of Corollary \ref{intermediate}:
	\begin{theorem}
		[Conservation for nuclei in logic]\label{mainlogic}
		Let $\vdash$ be $\vdash_p$ plus additional axioms, 
		let $j$ be a nucleus on ${\vdash}$,
		and 
		let ${\vdash_*}$ be $\vdash$ plus additional axioms such that ${\vdash_*}\subseteq{\vdash^j}$.
		\begin{enumerate}
			\item The following are equivalent in propositional logic:
			\begin{enumerate}
				\item $\Gamma\vdash^j\varphi\iff\Gamma\vdash_*j\varphi$ for all $\Gamma,\varphi$
				\item $\vdash_*$ satisfies the following axiom:
				\[\varphi\to j\psi\vdash_* j(\varphi\to\psi)\]
			\end{enumerate}
			\item 
			Let $\vdash^Q$, $\vdash_*^Q$, $\vdash^{Qj}$ be $\vdash$, $\vdash_*$, $\vdash^j$ plus quantifiers.
			If $j$ is compatible with substitution, then the following are equivalent in predicate logic:
			\begin{enumerate}
				\item $\Gamma\vdash^{Qj}\varphi\iff\Gamma\vdash_*^Qj\varphi$ for all $\Gamma,\varphi$
				\item $\vdash_*^Q$ satisfies the following axioms:
				\begin{align*}
					\varphi\to j\psi&\vdash_*^Q j(\varphi\to\psi)
					\\\forall xj\varphi&\vdash_*^Q j\forall x\varphi
				\end{align*}
			\end{enumerate}
		\end{enumerate}
	\end{theorem}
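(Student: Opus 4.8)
The plan is to read Theorem \ref{mainlogic} off Corollary \ref{intermediate}, once the abstract non-axiom rules have been matched with the concrete ones of positive (predicate) logic and their compatibility has been translated by Lemma \ref{lemmacompatible}. I would spell out (i) and then indicate the verbatim changes for (ii). The first move is to check the hypotheses of Corollary \ref{intermediate} with $\rhd\equiv{\vdash}$ and $\rhd_*\equiv{\vdash_*}$: since $\vdash_*$ is $\vdash$ plus additional axioms and ${\vdash_*}\subseteq{\vdash^j}$, those axioms are valid in $\vdash^j$, so $\vdash_*$ is an intermediate $j$-extension of $\vdash$, and it is again $\vdash_p$ plus axioms, which is the format Lemma \ref{lemmacompatible} requires. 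The one point that deserves a short argument is that $j$ is still a nucleus on $\vdash_*$ (so that $\vdash_{*j}$ is an entailment relation and Lemma \ref{lemmacompatible} applies): from L$j$ for $\vdash$ and the modus ponens axiom one gets $\varphi\to j\psi,j\varphi\vdash j\psi$, whence R$\to$ gives $\varphi\to j\psi\vdash j\varphi\to j\psi$; this transfers along ${\vdash}\subseteq{\vdash_*}$ and, re-using R$\to$ for $\vdash_*$ together with (T), yields L$j$ for $\vdash_*$, while R$j$ is immediate.

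Next I would unwind statement (a). By the definition of the weak $j$-extension, $\Gamma\vdash_* j\varphi$ means $\Gamma\vdash_{*j}\varphi$, so (a) is exactly the equality ${\vdash^j}={\vdash_{*j}}$. Corollary \ref{intermediate} already gives ${\vdash_{*j}}\subseteq{\vdash^j}$ unconditionally, so (a) is equivalent to the conservation ${\vdash^j}\subseteq{\vdash_{*j}}$, which by that same corollary holds if and only if every non-axiom rule generating $\vdash$ holds for $\vdash_{*j}$, i.e. is compatible with $j$.

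It then remains to identify these rules and convert their compatibility into the displayed axioms. In propositional logic the only non-axiom rule generating $\vdash$ (equivalently $\vdash_p$) is R$\to$, and Lemma \ref{lemmacompatible}(i) turns its compatibility into $\varphi\to j\psi\vdash_* j(\varphi\to\psi)$; this proves (i). For (ii) I would run the same argument over the base relation $\vdash^Q$, noting that adjoining quantifiers and adjoining the stability axiom commute, so that $(\vdash^Q)^j$ and $\vdash^{Qj}$ coincide, and likewise that $\vdash_*^Q$ is an intermediate $j$-extension of $\vdash^Q$ with $j$ a nucleus on it (same deduction-theorem argument, R$\to$ being available). The non-axiom rules now are R$\to$, R$\forall$ and L$\exists$; since L$\exists$ leaves the right-hand side untouched it is compatible with every nucleus, so by Lemma \ref{lemmacompatible}(i)--(ii) — using that $j$ commutes with substitution — compatibility of all of them amounts precisely to the two axioms $\varphi\to j\psi\vdash_*^Q j(\varphi\to\psi)$ and $\forall x\,j\varphi\vdash_*^Q j\forall x\varphi$, giving (ii).

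I expect the only real difficulty to be bookkeeping rather than a genuine mathematical obstacle: one must be sure that passing from $\vdash$ to $\vdash_*$ and then adjoining quantifiers introduces no new non-axiom rules — it does not, because $\vdash_*$ is $\vdash_p$ plus axioms only and every quantifier rule except R$\forall$ and L$\exists$ is an axiom — and that $j$ remains a nucleus on the enlarged base relation, which is exactly the step I isolated above. With those two points secured, the theorem is a direct specialisation of Corollary \ref{intermediate} via Lemma \ref{lemmacompatible}.
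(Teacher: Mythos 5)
Your proposal is correct and follows essentially the same route as the paper, which presents Theorem \ref{mainlogic} precisely as the specialisation of Corollary \ref{intermediate} through Lemma \ref{lemmacompatible}, after noting that R$\to$, R$\forall$ and L$\exists$ are the only non-axiom rules in play and that L$\exists$ is compatible with every nucleus. The two details you verify explicitly --- that $j$ remains a nucleus on ${\vdash_*}$ (via R$\to$ and \emph{modus ponens}) and that adjoining quantifiers commutes with adjoining the stability axiom --- are left implicit in the paper, and your arguments for them are sound.
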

	
	\subsection{The Glivenko nucleus}\label{SSGlivenko}
	Take intuitionistic logic $\vdash_i$ as $\rhd$, 
	and define \[j\varphi\equiv\neg\neg\varphi\,.\]
	This $j$ is well-known to be a nucleus over $\vdash_i$ \cite{van:kuroda,rosen},
	which we call the \emph{Glivenko nucleus}. 
	As stability (\ref{extaxiom}) equals $\RAA$, the strong extension $\vdash_i^j$ of intuitionistic logic $\vdash_i$ is nothing but classical logic $\vdash_c$.
	
	Since $\varphi\to\neg\neg\psi\vdash_i\neg\neg(\varphi\to\psi)$ follows, e.g., 
	from \cite[Lemma 6.2.2]{dvd:las}, and the Glivenko nucleus is  compatible with substitution,
	we get the following instance of Theorem \ref{mainlogic} where $\vdash$ is $\vdash_i$:
	\begin{application}\label{Glivenko}\
		\begin{enumerate}
			\item (\textbf{Glivenko's Theorem}) $\Gamma\vdash_c\varphi\iff\Gamma\vdash_i\neg\neg\varphi$ for all $\Gamma,\varphi$ in propositional logic. 
			\item (\textbf{G\"odel's Theorem})
			Let $\vdash_{*}$ 
			be $\vdash_i$ plus additional axioms such that ${\vdash_*}\subseteq{\vdash_c}$, and let 
			$\vdash_i^Q$, $\vdash_*^Q$ and $\vdash_c^Q$ be $\vdash_i$, $\vdash_*$ and $\vdash_c$ plus quantifiers.
			The following are equivalent in 
			predicate logic:
			\begin{enumerate}
				\item $\Gamma\vdash_c^Q\varphi\iff\Gamma\vdash_*^Q\neg\neg\varphi$ for all $\Gamma,\varphi$; 
				\item $\forall x\neg\neg\varphi\vdash_*^Q\neg\neg\forall x\varphi$ for all $\varphi$. 
			\end{enumerate}
		\end{enumerate}
	\end{application}
	Condition (b) in Application \ref{Glivenko} is called \emph{Double Negation Shift} ($\DNS$) and is known to define a proper 
	intermediate logic $\vdash_{\DNS}^Q$, that is, $\vdash_i^Q\subsetneq\vdash_{\DNS}^Q\subsetneq\vdash_c^Q$ \cite{esp:gli}.
	
	Now let $j\varphi\equiv\neg\varphi\to\varphi$.
	This $j$ is a nucleus
	\cite{van:kuroda,rosen},
	which we call the \emph{Peirce nucleus}, as it is a special case of the Peirce monad \cite{esca:peirce}.
	Over intuitionistic logic, it is easy to show that the Glivenko nucleus is equivalent to the Peirce nucleus, i.e., 
	$\neg\neg\varphi\approx_i\neg\varphi\to\varphi$ 
	for every $\varphi$.
	
	\subsection{The Dragalin--Friedman nucleus}\label{DragalinFriedman}
	Take minimal logic $\vdash_m$ as $\rhd$, 
	and define \[j\varphi\equiv\varphi\vee\bot\,.\]
	This $j$ is a nucleus, in fact
	a \emph{closed nucleus} \cite{van:kuroda,rosen}. We refer to this $j$ as the \emph{Dragalin--Friedman nucleus}.
	As stability (\ref{extaxiom}) is  equivalent to $\EFQ$, the strong extension $\vdash_m^j$ of minimal logic $\vdash_m$ 
	is nothing but intuitionistic logic $\vdash_i$.
	
	Since the Dragalin--Friedman nucleus is  compatible with substitution, we get the following instance of Theorem \ref{mainlogic} 
	where $\vdash$ is $\vdash_m$: 
	\begin{application}\label{DF}
		Let $\vdash_{*}$ 
		be $\vdash_m$ plus additional axioms such that ${\vdash_*}\subseteq{\vdash_i}$.
		\begin{enumerate}
			\item The following are equivalent in propositional logic:
			\begin{enumerate}
				\item $\Gamma\vdash_i\varphi\iff\Gamma\vdash_*\varphi\vee\bot$ for all $\Gamma,\varphi$; 
				\item $\varphi\to(\psi\vee\bot)\vdash_*(\varphi\to\psi)\vee\bot$ for all $\varphi,\psi$. 
			\end{enumerate}
			\item Let $\vdash_m^Q$, $\vdash_*^Q$ and $\vdash_i^Q$ be $\vdash_m$, $\vdash_*$ and $\vdash_i$ plus quantifiers.
			The following are equivalent in predicate logic:
			\begin{enumerate}
				\item $\Gamma\vdash_i^{Q}\varphi\iff\Gamma\vdash_*^{Q}\varphi\vee\bot$ for all $\Gamma,\varphi$; 
				\item $\varphi\to(\psi\vee\bot)\vdash_*^Q(\varphi\to\psi)\vee\bot$ and $\forall x(\varphi\vee\bot)\vdash_*^Q(\forall x\varphi)\vee\bot$ for all $\varphi,\psi$. 
			\end{enumerate}
		\end{enumerate}
	\end{application}
	
	\subsection{The deduction nucleus}\label{deduction}
	Let $\vdash$ be $\vdash_p$ or $\vdash_p^Q$ plus additional axioms. 
	We fix a propositional formula $A$ and set 
	\[j\varphi\equiv A\to\varphi\,.\]
	This $j$, which we call the \emph{deduction nucleus}, is an instance of the \emph{open nucleus} \cite{van:kuroda,rosen}.
	As for this $j$ stability (\ref{extaxiom}) is equivalent to $\vdash A$, the strong extension $\vdash^j$ 
	is the smallest extension of $\vdash$ in which $A$ is derivable. 
	
	The deduction nucleus is compatible with substitution, and the following axioms are easy to show 
	(see, e.g., \cite[Lemma 6.2.1]{dvd:las} for the case of intuitionistic logic): 
	\begin{align*}
		\varphi\to(A\to\psi)&\vdash A\to(\varphi\to\psi)\\
		\forall x(A\to\varphi)&\vdash A\to\forall x\varphi
	\end{align*}
	Hence we get the following instance of Theorem \ref{mainlogic} where ${\vdash}={\vdash_{*}}$ is $\vdash_p$ or $\vdash_p^Q$ plus additional axioms:
	\begin{application}\label{open}
		Let $\vdash$ be $\vdash_p$ or $\vdash_p^Q$ plus additional axioms. Then
		\[\Gamma\vdash^j\varphi\iff\Gamma\vdash A\to\varphi\]
		that is, $A\to\varphi$ is derivable from $\Gamma$ if and only if $\varphi$ is derivable from $\Gamma$ when assuming that $A$ is derivable. 
	\end{application}
	As $\Gamma\vdash^j\varphi$ also means that $\varphi$ is derivable from $\Gamma\cup\{A\}$, Application \ref{open} is a variant of the \emph{deduction theorem}: 
	\[\Gamma,A\vdash\varphi\iff\Gamma\vdash A\to\varphi\]
			
	\section{Related and future work}
	
	In propositional lax logic (PLL) \cite{Fair97} the modality $\bigcirc$ is characterised by axioms and rules corresponding \cite[p.~2, (2)]{Fair97} 
	to the ones of a (logical) nucleus. Also the rules L$j$ and R$j$ of the present paper are counterparts of the rules $\bigcirc$L and $\bigcirc$R 
	of PLL \cite[p.~5]{Fair97}. We expect to gain insight by relating our approach to PLL, its semantics and applications. To start with, 
	in the vein of \cite[Lemma 2.1]{Fair97} rule R$j$ is tantamount to the inverse of L$j$. 
	
	It is known that in certain cases, given a nucleus $j$, it is possible to define a function $J\colon S\to S$, known as \emph{Kuroda-style $j$-translation}, 
	such that $U\rhd^jb$ implies $JU\rhd_jJb$, which can be viewed as conservation of $\rhd^j$ over $\rhd_j$ modulo $j$-translation. 
	Some particular instances of this are discussed in \cite{van:kuroda}, Proposition 4. Is there a general result for arbitrary entailment relations?
	
	It will be a challenge to include also other proof translation methods. For instance, 
	Friedman's A-translation \cite{fri:translation} makes use of the closed nucleus to prove Markov's rule; 
	and Ishihara and Nemoto \cite{DBLP:journals/mlq/IshiharaN16} use the same translation but work with 
	the open nucleus to prove the independence-of-premiss rule.
	
	We will further study nuclei about other forms of negation: weak negation over positive logic \cite{blkov2019proof},
	co-negation over dual logics \cite{Bellin2014} and strong negation over extensions of intuitionistic logic \cite{Kracht1998,doi:10.3166/jancl.18.341-364}.
	
	\section*{Acknowledgements} 
	{The present study was carried out within the projects ``A New Dawn of Intuitionism:
			Mathematical and Philosophical Advances'' (ID 60842) funded by the John Templeton Foundation,
			and ``Reducing complexity in algebra, logic, combinatorics - REDCOM'' belonging to the programme
			``Ricerca Scientifica di Eccellenza 2018'' of the Fondazione Cariverona. 
			Both authors are members of the ``Gruppo Nazionale per le Strutture Algebriche, Geometriche e le loro Applicazioni''  (GNSAGA)
			of the Istituto Nazionale di Alta Matematica (INdAM).
			Last but not least, the authors are grateful to Daniel Wessel for his ideas, interest and suggestions, 
			and to Tarmo Uustalo for pointing out propositional lax logic. 
		}
	
	\bibliographystyle{eptcs}
	\bibliography{synergy}
\end{document}